\documentclass[sigconf,authorversion,nonacm,10pt]{acmart}
\usepackage[english]{babel}
\usepackage[utf8]{inputenc}
\usepackage{graphicx}
\graphicspath{ {./Figs/} }
\usepackage{textcomp}
\usepackage{xcolor}
\usepackage{subfigure}
\usepackage{balance}
\usepackage[ruled, lined, linesnumbered, commentsnumbered, noend]{algorithm2e}
\usepackage[noend]{algorithmic}
\usepackage{thmtools} 
\usepackage{thm-restate}
\usepackage{hyperxmp,hyperref}
\hypersetup{
 colorlinks,
 citecolor=NavyBlue,
 linkcolor=ForestGreen,
}
\newtheorem{theorem}{Theorem}

\newtheorem{definition}{Definition}

\newcommand{\etal}{\textit{et al.}}
\newcommand{\Delay}{\textsc{Delay}\xspace}
\newcommand{\Greedy}{\textsc{Greedy}\xspace}
\newcommand{\MIP}{\textsc{MIP}\xspace}
\newcommand{\NoDo}{\renewcommand\algorithmicdo{}}

\AtBeginDocument{%
  }

\setcopyright{acmcopyright}
\acmYear{2022}\copyrightyear{2022}
\acmConference[SOSR 22]{Symposium on SDN Research}{October 19--20, 2022}{Virtual Event, USA}
\acmBooktitle{Symposium on SDN Research (SOSR 22), October 19--20, 2022, Virtual Event, USA}
\acmPrice{15.00}
\acmDOI{10.1145/3563647.3563655}
\acmISBN{978-1-4503-9892-3/22/10}

\begin{document}

\title[The Augmentation-Speed Tradeoff for Consistent Network Updates]{The Augmentation-Speed Tradeoff \\ for Consistent Network Updates}

\author{Monika Henzinger}
\email{monika.henzinger@univie.ac.at}
\orcid{0000-0002-5008-6530}
\affiliation{
  \institution{Department of Computer Science, 
  \\ University of Vienna}
  \city{Vienna}
  \country{Austria}
}
\author{Ami Paz}
\email{ami.paz@LISN.fr}
\orcid{0000-0002-6629-8335}
\affiliation{
  \institution{LISN - CNRS
  \\ \& Paris-Saclay University}
  \city{Paris}
  \country{France}
}
\author{Arash Pourdamghani}
\email{pourdamghani@tu-berlin.de}
\orcid{0000-0002-9213-1512}
\affiliation{
  \institution{TU Berlin}
  \city{Berlin}
  \country{Germany}
}
\author{Stefan Schmid }
\email{stefan.schmid@tu-berlin.de}
\orcid{0000-0002-7798-1711}
\affiliation{
  \institution{TU Berlin \& Fraunhofer SIT}
  \city{Berlin}
  \country{Germany}
}
\begin{abstract}
Emerging software-defined networking technologies enable more adaptive communication infrastructures, allowing for quick reactions to changes in networking requirements by exploiting the workload’s temporal structure. However, operating networks adaptively is algorithmically challenging, as meeting networks’ stringent dependability requirements relies on maintaining basic consistency and performance properties, such as loop freedom and congestion minimization, even during the update process.
This paper leverages an augmentation-speed tradeoff to significantly speed up consistent network updates. 
We show that allowing for a small and short (hence practically tolerable, e.g., using buffering) oversubscription of links allows us to solve many network update instances much faster, as well as to reduce computational complexities (i.e., the running times of the algorithms). We first explore this tradeoff formally, revealing the computational complexity of scheduling updates. We then present and analyze algorithms that maintain logical and performance properties during the update. Using an extensive simulation study, we find that the tradeoff is even more favorable in practice than our analytical bounds suggest. In particular, we find that by allowing just 10\% augmentation, update times reduce by more than 32\% on average, across a spectrum of real-world networks.
\end{abstract}

\sloppy

\begin{CCSXML}
<ccs2012>
   <concept>
       <concept_id>10003033.10003068.10003073.10003075</concept_id>
       <concept_desc>Networks~Network control algorithms</concept_desc>
       <concept_significance>500</concept_significance>
       </concept>
       <concept>
<concept_id>10003752.10003777.10003778</concept_id>
<concept_desc>Theory of computation~Complexity classes</concept_desc>
<concept_significance>300</concept_significance>
</concept>
<concept>
<concept_id>10003752.10003809.10003716.10011136.10011137</concept_id>
<concept_desc>Theory of computation~Network optimization</concept_desc>
<concept_significance>300</concept_significance>
</concept>
<concept>
<concept_id>10003752.10003809.10003635.10003644</concept_id>
<concept_desc>Theory of computation~Network flows</concept_desc>
<concept_significance>500</concept_significance>
</concept>
 </ccs2012>
<concept>
<concept_id>10003033.10003099.10003102</concept_id>
<concept_desc>Networks~Programmable networks</concept_desc>
<concept_significance>500</concept_significance>
</concept>
\end{CCSXML}

\ccsdesc[300]{Theory of computation~Complexity classes}
\ccsdesc[300]{Theory of computation~Network optimization}
\ccsdesc[500]{Theory of computation~Network flows}
\ccsdesc[500]{Networks~Network control algorithms}
\ccsdesc[500]{Networks~Programmable networks}

\keywords{Software-defined networking, network algorithms, scheduling}

\maketitle

\section{Introduction}

To render communication networks more dependable,
the networking community currently makes great efforts
to automate network operations. The envisioned ''self-driving'' 
networks~\cite{FeamsterR18} can relieve operators of their most complex tasks, hence minimizing the chances for human errors, which frequently are 
the cause of major outages~\cite{BeckettMMPW16,chirgwin2017google,ems1_2018}. 
Furthermore, more automated networks enable more adaptive
network operations, allowing to quickly react to network events
such as shifts in the demand and hence to exploit temporal
structure in the traffic patterns for optimizations~\cite{AvinGG020,RoyZBPS15,pieee22}. 
These more automated and adaptive network operations are 
enabled, among others, by emerging software-defined 
and programmable networking technologies, that allow direct control over the forwarding tables of switches and routers.

A programmatic and software-defined control and update of forwarding paths can be attractive in many situations~\cite{FoersterSV19}.
For example, fast route updates can be useful in reacting to security policy changes or security threats, by actively rerouting traffic through a firewall. In wide-area networks, Internet Service Providers may adjust their traffic engineering policy in reaction to changes in the load. 
Adaptions to the routes taken by packets may also be required to react to link failures or to support maintenance work or service relocations.

However, a more adaptive network operation introduces an algorithmic challenge:
in order to meet the stringent dependability and performance requirements, networks need to be reconfigured \emph{quickly and consistently}. 
A key challenge here is that updates at different switches occur asynchronously, and update times can vary significantly, between milliseconds to fractions of a second~\cite{JinLGKMZRW14, UpToSeconds}.
Especially when adaptions are frequent, it is important that the network fulfill certain properties, such as congestion freedom and loop freedom, \emph{even during the update}.

Over the last few years, the consistent network update problem has received much attention in the literature. 
Seminal work focused on logical properties~\cite{ReitblattFRSW12,LudwigMS15,ForsterMW16,MahajanW13,mcclurg2015efficient,ludwig2014good}, but also performance aspects received much attention early on~\cite{FoersterSV19,BrandtFW16,liu2013zupdate,christensen2021latte}. While the question of how to reroute flows in a congestion-free manner is still not well-understood algorithmically (especially if one requires that algorithms come with provable performance and approximation guarantees), it has been shown recently that the resulting update schedules can be long and complex, requiring many rounds of updates~\cite{AmiriDSW18,LudwigMS15}.

This paper is motivated by the observation that already a small and short oversubscription of links (which we will refer to as augmentation) can lead to significantly faster update schedules (see Figure~\ref{fig: Intro} for an illustration). This, in turn, may allow more fine-grained network operations.
Such augmentation is often feasible in practice and mitigated by buffering and congestion control if the augmentation is bounded in magnitude and time. Short oversubscriptions are common today in congestion control, especially unproblematic in virtual networks, which provide soft capacity constraints, and typically do not affect prices~\cite{laoutaris2009delay}.

\begin{figure}[t]
    \centering
    \includegraphics[width=0.5\textwidth, trim=10 0 0 0, clip]{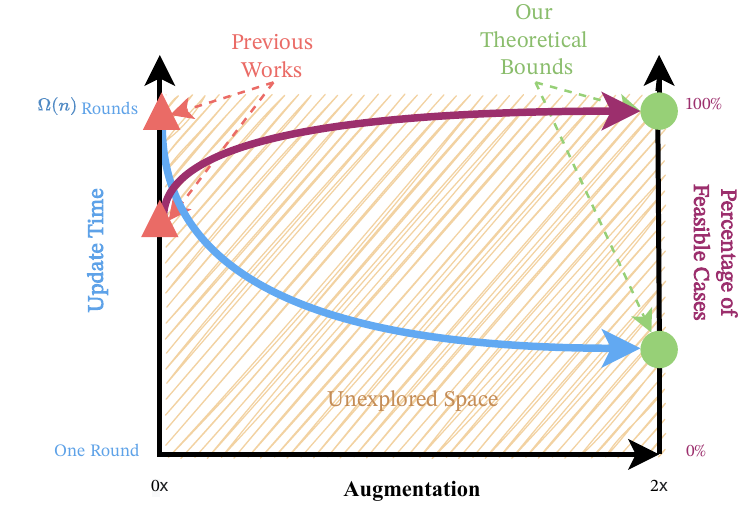}
    \caption{This paper explores the benefits of augmentation on the speed and feasibility of network updates. While prior work (red triangles on the left) did not consider augmentation (the dashed orange area), our approach with augmentation provides flexibility for operators and hence supports more fine-grained network operations. The green circles represent our theoretically optimal bounds that we proved in this paper, and curves represent the tradeoff that we saw in our empirical results, as can also be seen in the counterpart of this qualitative plot in the evaluation section (Figure~\ref{fig: IntroData}).}
     \label{fig: Intro}  
\end{figure}

\subsection{Our Contributions}
This paper argues that existing literature on network update scheduling ignores the fact that short link oversubscriptions are unproblematic due to buffering, and uncovers an interesting tradeoff between the tolerable oversubscription and the speed at which networks can be updated, measured by the number of rounds in the rerouting schedule.
In general, several challenges might occur when trying to update a network's routing policy:
infeasibility, i.e., cases where an update schedule without overloading any link simply does not exist; 
speed, i.e., the time it takes for the updates to complete;
and computability, i.e., cases where an update schedule exists, but finding a feasible or optimal update schedule is NP-hard.
We show that all these challenges can be overcome by allowing a small oversubscription of the communication links.
In this paper, 
we will distinguish between additive and multiplicative \emph{augmentation} of a link.
Additive augmentation refers to the maximum capacity increase on any given link in the network, and multiplicative augmentation is a factor by which we multiply capacities.

We first explore this tradeoff analytically and show that under a factor-2 multiplicative augmentation, fast update schedules is always feasible; we also show that for smaller factors, the problem of computing update schedules is NP-hard---i.e., a short schedule may exist, but finding it is computationally infeasible in the worst case.
We then present both optimal and fast algorithms to exploit the augmentation-speed tradeoff while provably maintaining basic consistency properties. 
We report on an extensive simulation study using real-world networks, 
and we find that we can reach higher speeds with a slight augmentation.
More precisely, a 10\%  augmentation can reduce update times by 32\% on average across a range of networks derived from the Internet Topology Zoo.
As a contribution to the research community, we release our experimental artifacts as well as our simulation code (as open source) together with this paper at \hyperref[GitHub]{github.com/inet-tub/AugmentRoute}.

\subsection{Organization}
The remainder of this paper is organized as follows.
In~\S\ref{sec: model}, we introduce our formal model and define the properties which need to be maintained transiently. \S\ref{sec:theory} details the analytical study of the valid update schedules under
augmentation and derives hardness results. 
Then we present two polynomial-time greedy algorithms and an optimal algorithm based on mixed integer programming in~\S\ref{sec:algorithms}, and explore their behavior on real-world networks in~\S\ref{sec:empricalResults}.
After reviewing the related work in~\S\ref{sec:relatedWork}, we conclude our contribution in~\S\ref{sec:conclusion}.

\section{Modelling Consistent Network Updates and Tradeoffs}\label{sec: model}

We model a network as a directed graph $G=(V,E)$. The set $V$ consists of $n$ nodes 
representing the switches in the network, and the set $E \subseteq V \times V$ of $m$ directed edges denoting the links.
A directed edge $e=(v,w) \in E$ connects the \emph{tail} node $v$ to the \emph{head} node $w$. An edge $e$ has a real-valued positive capacity $c_e \in \mathbb{R}_{\geq0}$, and $C_{\max} = \max_e c_e$ is the maximum edge capacity among all edges.

Flows are unsplittable and routed along unique paths, which are dictated by the network's routing policy.
When the policy changes, it may become necessary to update the routing path by updating the outgoing edges of the nodes (i.e., the forwarding rules).
We consider flow pairs consisting of an \emph{old} flow (which is already in use)
and an \emph{updated} flow  (which the new policy enforces).
We emphasize that flows in each flow pair share the same source and terminal node. We also consider that the forwarding is done based on \emph{both} the source and the terminal. 
\begin{definition}[Flow pairs]
A set of $k$ \emph{flow pairs} is defined as $P = \{P_1,\dots,P_k\}$, where each flow pair $P_i$ consists of an old flow  $F_i^o$ and an updated flow $F_i^u$. Flows of the $i^{th}$  flow pair are \emph{unsplittable}, which means each is only a simple $s_i$-$t_i$ path in $G$. Each flow corresponds to a real-valued positive demand $d_i$, initiating from the same source node $s_i$ and ending at the same terminal node $t_i$.
\end{definition}

In order to update a flow pair $P_i$ from and old flow $F_i^o$ to an updated flow $F_i^u$, we need to update all the nodes that are appearing only in $F_i^o$ or only in $F_i^u$.
An \emph{update schedule} from the old flow $F_i^o$ to the updated flow $F_i^u$ is a sequence of \emph{update rounds}.
In each round, a subset of the nodes changes their outgoing edges from the edges used in $F_i^o$ to the edges used in $F_i^u$.
Formally, for a flow pair $P_i$, we define an $R$-round 
update schedule $U_i = \{U_i^1,\dots, U_i^R\}$
such that in each round $r= 1,\dots,R$, a set of nodes that were not included in previous updates, i.e.,  $U_i^r\subseteq V(F_i^o \cup F_i^u)\setminus (U_i^1\cup\cdots\cup U_i^r-1)$, are updated.
We assume that all the changes in the same round happen asynchronously, i.e., in an unpredictable order. 
This makes the problem harder, as a worst-case order inside each update set must be taken into consideration.

To maintain consistency during network updates, an update schedule must provide \emph{loop freedom} and \emph{congestion freedom}.

\begin{figure}[t]
    \centering
    \subfigure[Initial flow]
    {
    \centering
        \includegraphics[scale=1.1]{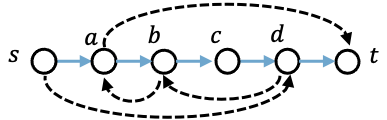}
    }
    \subfigure[After updating nodes $a$ and $d$]
    {
        \centering
        \includegraphics[scale=1.1]{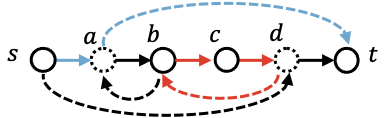}
    }
    \caption{Solid lines represent the old flow, dashed lines the updated flow. Solid circles show nodes that are not updated, and dotted circles are updated nodes. (a)~Initially, a flow passes through blue lines. (b)~After the update $a$ and $d$, a transient loop appears (red lines) that violates loop-freedom property,  even though the terminal is still reachable (through blue lines).}
     \label{fig: loop}  
\end{figure}

\subsection{Loop Freedom}

When scheduling batches of updates simultaneously, individual updates at nodes can happen at different times, which might cause transient \emph{forwarding loops}, see Figure~\ref{fig: loop} for an example.
\emph{Loop freedom} requires that forwarding loops never happen during the update of a flow pair, regardless of whether nodes in the loop can be reached from the source node or not.

In the case of nodes that are in the old path but not in the updated path, $F_i^o\setminus F_i^u$, their routing policy is updated from edge to no-edge. Since we need to ensure reachability at all times, we update these nodes after the nodes leading to them.
Similarly, nodes in $F_i^u\setminus F_i^o$ initially have not been assigned an outgoing edge and need to be updated before the nodes leading to them. 
Note that together with loop-freedom, handling these cases guarantees that a path from the source to the terminal node exists at all times:
Each node that has an incoming edge has a single outgoing edge (except for the terminal node), and there are no loops in the graph.

\subsection{Congestion Freedom}

Assume that for each flow pair, we have an update schedule that is valid and loop-free.
We want to make sure that we can apply all the updates simultaneously without causing congestion on the network links.
For this, fix an update round $r$ in all flow pairs, and
for each flow pair $i$, consider the temporary flow $F_i$, 
that includes edges from  old flow $F^o_i$ and edges from the updated flow $F^u_i$ that pass flow "during" round $r$, i.e. edges from old flow that has not been updated "before" round $r$, edges from the updated flow that will be used "after" round $r$.

\begin{definition}[Valid schedule]
An update schedule is \emph{valid} if at any time,
the set of temporary flows $F = \{F_1,\dots,F_m\}$
satisfies that for every edge $e \in E$,
the sum of demands of flows that pass through an edge is at most its capacity, i.e.  $\forall e \in E: \sum_{j:e \in F_j} d_j\leq c_e$.
\end{definition}

This paper is motivated by the benefits of slight augmentation of the current capacity.  We investigate two possibilities for augmentation: \emph{multiplicative augmentation} and \emph{additive augmentation}. In the first approach, we consider all capacities multiplied by a real number $\alpha \geq 1$. In the latter one, we allow capacities to be increased by a fixed number $\beta \geq 0$.

\begin{definition}[$(\times\alpha)$-valid schedule, $(+\beta)$-valid schedule]
	For $\alpha\geq1, \beta\geq0$, 
	we say that a schedule is \emph{$(\alpha,\beta)$-valid} if at any time, 
	the set of temporary flows $F = \{F_1,\dots,F_m\}$
	satisfies that for every edge $e \in E$,
	we have
	$\sum_{j:e \in F_j} d_j\leq\alpha c_e+\beta$.
	A schedule is \emph{$(\times\alpha)$-valid} if it is $(\alpha,0)$-valid,
	and \emph{$(+\beta)$-valid} if it is $(1,\beta)$-valid.
\end{definition}

We first show that every update schedule is $(\times2)$-valid and $(+C_{\max})$-valid, and then prove that for any $\epsilon > 0$, , the two problems of deciding if a $(\times(2-\epsilon))$-valid update schedule exists, and if a $(+(C_{\max}/3-\epsilon))$-valid update schedule exists, are both NP-hard.

\section{Theoretical Analysis}\label{sec:theory}
In this section, we start exploring the speed-congestion tradeoff analytically. 
We first derive upper bounds on the required additive and multiplicative augmentation that make update schedules valid. 
We further explore the computational complexity of finding valid update schedules, showing it is NP-hard to decide whether a valid update schedule exists when the augmentation is below some threshold.

\subsection{Upper Bounds}
The following theorem characterizes the amount of multiplicative and additive augmentation needed to render update schedules valid.
\begin{theorem}
	\label{thm:2 is enough}
	Every update schedule is $(\times 2)$-valid and $(+C_{\max})$-valid.
\end{theorem}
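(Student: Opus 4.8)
The plan is to reduce the theorem to the premise under which consistent updates are studied at all, namely that the initial routing (all old flows) and the final routing (all updated flows) are each congestion-free: for every edge $e$,
\[
\sum_{j:\,e\in F_j^o} d_j \le c_e
\qquad\text{and}\qquad
\sum_{j:\,e\in F_j^u} d_j \le c_e .
\]
(Without this, even the zero-round ``schedule'' would already overload a link, so this is precisely the regime in which the statement is meaningful.) From there, the argument is entirely edge-by-edge and time-by-time.

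First I would fix an arbitrary execution of an arbitrary update schedule and an arbitrary instant during it, yielding temporary flows $F_1,\dots,F_k$. The only structural fact needed is that, by the definition of a temporary flow, $F_j \subseteq F_j^o \cup F_j^u$ for every $j$: at that instant, each node's forwarding entry for flow $j$ is either its $F_j^o$-entry (if the node has not yet been updated) or its $F_j^u$-entry (if it has), so every edge carrying flow $j$ lies in $F_j^o\cup F_j^u$. Hence, for any fixed edge $e$, $\{\,j : e \in F_j\,\} \subseteq \{\,j : e \in F_j^o\,\} \cup \{\,j : e \in F_j^u\,\}$, and therefore
\[
\sum_{j:\,e\in F_j} d_j \;\le\; \sum_{j:\,e\in F_j^o} d_j \;+\; \sum_{j:\,e\in F_j^u} d_j \;\le\; c_e + c_e \;=\; 2c_e .
\]
Since $e$ and the instant were arbitrary, this is exactly $(\times 2)$-validity. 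For $(+C_{\max})$-validity I would either keep the two partial sums separate and bound the second one by $C_{\max}=\max_e c_e$, or simply note $2c_e = c_e + c_e \le c_e + C_{\max}$; either way $\sum_{j:\,e\in F_j} d_j \le c_e + C_{\max}$ holds at all times.

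I do not expect a genuine obstacle here; the point that needs care is to state the containment $F_j \subseteq F_j^o \cup F_j^u$ cleanly despite the asynchronous, adversarially ordered updates within a round — which follows immediately from the per-node observation above, independently of the order — and to make explicit that loop-freedom and reachability play no role in the capacity estimate itself (they are needed elsewhere, only to ensure that the temporary flow is a genuine $s_j$-$t_j$ path).
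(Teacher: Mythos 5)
Your proposal is correct and follows essentially the same argument as the paper: both split the flows crossing an edge at any instant into those still using their old path and those already on their updated path, bound each group by $c_e$ using the validity of the old and updated configurations, and derive the additive bound from $2c_e \le c_e + C_{\max}$. No meaningful difference in approach.
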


\begin{proof}
	Consider an update schedule at any given point in time during update, and an edge $e$.
	Let $S^o$ be the set of indices of flows that are not yet updated at this time point at $e$, and $S^u$ the set indices of flows that are updated;
	the current load on $e$ is at most $\sum_{i\in S^o}d_i+\sum_{i\in S^u}d_i$.
	
	Since the set of old flows is valid, we have $\sum_{i:e\in F^o_i}d_i\leq c_e$, and similarly the set of updated flows is valid and 
	$\sum_{i:e\in F^u_i}d_i\leq c_e$.
	Note that $S^o\subseteq \{i:e\in F^o_i\}$ and $S^u\subseteq \{i:e\in F^u_i\}$, and hence
	$\sum_{i\in S^o}d_i\leq \sum_{i:e\in F^o_i}d_i\leq c_e$ and
	$\sum_{i\in S^u}d_i\leq \sum_{i:e\in F^u_i}d_i\leq c_e$.
	Thus, 
	$\sum_{i\in S^o}d_i+\sum_{i\in S^u}d_i\leq 2c_e$, as desired.	
	
	For the additive case, note that $2c_e\leq c_e+C_{\max}$, and the proof immediately follows from the multiplicative case.
\end{proof}
These bounds leave the question of how to find a schedule that minimizes the number of rounds,
a question that needs to be resolved for each flow pair separately.
We explore techniques to minimize the number of rounds later in the paper.

\subsection{Hardness Results}\label{sec:lowerBound}

We next study the computational complexity and prove a tight converse of the multiplicative case in Theorem~\ref{thm:2 is enough},
and a non-tight converse for the additive case.

\begin{restatable}{theorem}{multi}
	\label{thm:2-eps is nph}
	For every constant $1/3 > \epsilon>0$, deciding if
	a
	$(\times(2-\epsilon))$-valid update schedule exists is NP-hard.
\end{restatable}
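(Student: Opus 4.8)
The plan is to reduce from a known NP-hard problem --- most naturally 3-Partition or a variant of bin-packing/scheduling --- to the question of whether a $(\times(2-\epsilon))$-valid update schedule exists. The reason $3$-Partition is attractive is that the factor $2$ in Theorem~\ref{thm:2 is enough} comes from the worst case where an edge carries the \emph{entire} old load plus the \emph{entire} new load simultaneously; to make a schedule with augmentation below $2$ feasible we must be able to order the node updates so that, on every congested edge, the old flows vanish quickly enough that the new flows arriving never push the total past $(2-\epsilon)c_e$. Deciding the existence of such an ordering should encode a packing constraint: we want to ``interleave'' the removal of old demand and the addition of new demand, and a bad instance is one where the demands are sized so that any interleaving creates a moment of near-$2$ congestion unless the demands can be partitioned into groups of equal size.

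Concretely, I would build a gadget graph in which a single bottleneck edge $e$ (or a small bundle of parallel bottleneck edges) is shared by many flow pairs. For each element $a_j$ of the $3$-Partition instance I create a flow pair whose old flow uses $e$ and whose updated flow also uses $e$ but whose update requires first updating some ``upstream'' node; the scheduling dependencies (nodes in $F^u\setminus F^o$ must be updated before the nodes leading to them, nodes in $F^o\setminus F^u$ after, plus loop-freedom) are used to force that within any single round only a controlled subset of the old demands can be ``turned off'' and only a controlled subset of the new demands ``turned on.'' By choosing the capacity $c_e$ and the demands $d_j$ proportional to the $a_j$ (scaled so that the target sum $B$ corresponds to capacity, and the number of rounds is fixed to the number of groups), a schedule keeping load $\le (2-\epsilon)c_e$ exists iff the $a_j$ can be partitioned into triples each summing to $B$. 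The $\epsilon$ slack is absorbed by padding every demand with a tiny additive $\delta$ so that exceeding the target sum by even one element's worth overshoots $(2-\epsilon)c_e$, while a perfect partition stays safely below it; since $\epsilon < 1/3$ is bounded away from the regime where the bound becomes vacuous, a single universal gadget scaling works for all such $\epsilon$.

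The key steps, in order: (i) fix the source problem and its numeric normalization; (ii) describe the gadget graph, its edges and capacities, and the $k$ flow pairs, making sure each old and each updated flow is individually a feasible (valid, load-$\le c_e$) routing --- this is needed so that the instance is a legitimate input; (iii) verify the structural constraints (reachability, the must-update-before/after rules, loop-freedom) translate exactly into the scheduling freedom we want, namely that round $r$ corresponds to activating group $r$ of the partition; (iv) prove the forward direction: a valid partition yields an $R$-round $(\times(2-\epsilon))$-valid schedule; (v) prove the converse: any $(\times(2-\epsilon))$-valid schedule, read off round by round, induces a valid partition; (vi) check the reduction is polynomial and that $3$-Partition's strong NP-hardness survives the encoding (the demands are polynomially bounded, so even unary-encoded $3$-Partition reduces cleanly).

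The main obstacle I anticipate is step (iii): getting the graph gadget to \emph{exactly} enforce that the update dependencies and loop-freedom force the ``group structure,'' rather than allowing the adversary-or the scheduler-extra freedom to update nodes in an order that sidesteps the packing constraint. In particular, because updates within a round happen in an adversarial asynchronous order, I must make sure the gadget does not accidentally let a clever round-assignment dodge the congestion spike, and conversely that the worst-case intra-round ordering does not make \emph{every} assignment infeasible (which would make the reduction trivially one-sided and useless). Threading this needle --- designing dependency chains so that ``which round a node is updated in'' is the only real degree of freedom, and the asynchronous order within a round is immaterial for the bottleneck edge --- is where the real work lies. A secondary subtlety is handling loop-freedom: I will likely route the old and updated flows through disjoint internal vertices except at the bottleneck, so that loops simply cannot form and the loop-freedom requirement is automatically satisfied, letting the reduction focus purely on congestion.
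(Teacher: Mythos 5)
There is a genuine gap, and it sits exactly where you flagged ``the real work'': your construction has no mechanism that prevents the scheduler from simply serializing the updates. The decision problem asks whether \emph{any} $(\times(2-\epsilon))$-valid schedule exists, with no bound on the number of rounds, so you do not get to ``fix the number of rounds to the number of groups.'' In a gadget where a bottleneck edge $e$ carries old demands that must leave and new demands that must arrive, the per-flow precedence rules (update $F^u\setminus F^o$ nodes before, $F^o\setminus F^u$ nodes after) and loop-freedom only constrain the order of updates \emph{within} a single flow pair; distinct flow pairs are coupled only through capacity. Hence the scheduler can drain the old flows off $e$ one at a time (each such update needs spare capacity only on that flow's \emph{new} path) and only afterwards admit the new flows onto $e$, one per round. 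Unless draining flow $j$ is itself blocked by some other not-yet-updated flow, every instance of your reduction is a yes-instance, and the $3$-Partition structure is never consulted. What is missing is a \emph{cyclic} blocking structure among flow pairs --- a near-deadlock in which no flow can move until some other flow has moved --- so that feasibility hinges on a combinatorial choice rather than on packing demands into a fixed number of rounds.

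This is precisely what the paper's construction (following Amiri et al.) supplies, via a reduction from 3SAT rather than $3$-Partition. Each variable gadget contains two flows $F_{true},F_{false}$ whose updated paths collide on a shared saturated path, so only one can be updated first (encoding the assignment); a ladder of flows $F(0),\ldots,F(\sqrt a)$, each of whose updated path is the next one's old path, forces a long cascade before the second of $F_{true},F_{false}$ can move; and a single \emph{blocking flow} whose updated path runs through every clause gadget can only be updated after each clause has had one of its three flows rerouted, which in turn is only possible if the corresponding literal's flow was updated first. The ladder length depends on $\epsilon$ (through $a=(\lfloor 1/(2\epsilon)\rfloor+1)^2$), so even the ``single universal gadget scaling'' you hoped for does not materialize. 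If you want to salvage a $3$-Partition route, you would need to invent an analogous blocking/cascade mechanism that makes the round structure rigid; as written, your reduction's converse direction (step (v)) fails because a valid schedule need not induce any partition.
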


The proof of this theorem extends a construction from~\cite{AmiriDP0W19}, and shows that an algorithm for finding a $(\times(2-\epsilon))$-valid update schedule can be used in order to find a satisfying assignment for a 3-CNF formula of the well-known 3SAT problem~\cite{3SAT}. In the 3SAT problem, we should assign 0-1 values to a set of binary variables such that a set of boolean clauses become valid. Each clause consists of only three variables, or their negation.

The details of the proof can be found in Appendix~\ref{appendix: omitted}. Our proof transforms each clause or variable into a series of \emph{gadgets}. A gadget is a series of old and updated paths between a pair of nodes, designed to transform a valid update into a solution for 3SAT.

\begin{restatable}{theorem}{adi}
	\label{thm:additive is nph}
	For every constant $1 > \epsilon>0$ deciding if a $(+(C_{\max}/3-\epsilon))$-valid update schedule exists is NP-hard.
\end{restatable}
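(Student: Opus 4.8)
The plan is to give a polynomial-time reduction from 3SAT~\cite{3SAT}, re-using the gadget machinery built for Theorem~\ref{thm:2-eps is nph} (which itself extends~\cite{AmiriDP0W19}) but re-weighting the edge capacities so that the \emph{additive} slack $C_{\max}/3$ becomes the decisive threshold. Given a 3-CNF formula $\phi$, I would construct a flow-pair instance in which every flow has demand $1$ and edges carry only two capacities: ``transport'' edges of capacity $3$, which carry the three literal flows entering a clause, and ``conflict'' edges of capacity $2$, inside or just downstream of the gadgets, at which a bad truth assignment can make all three literal flows of a clause coexist. No edge needs capacity larger than $3$, so $C_{\max}=3$ and $C_{\max}/3=1$. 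The logical encoding is the one standard for such update-scheduling gadgets: each variable is a short chain of old/updated $s$-$t$ path pairs admitting exactly two conflict-free update orders, read as its two truth values; each clause is a gadget whose conflict edge can be updated without overload if and only if at least one of its three literal inputs is true. Wiring the variable gadgets into the clauses in which they occur, the target equivalence is: if $\phi$ is satisfiable then there is a \emph{valid} (zero-overload) global update schedule, and if $\phi$ is unsatisfiable then in every global schedule some clause's conflict edge is forced, under a worst-case asynchronous ordering within some round, to carry all three of its literal flows simultaneously, reaching load $3=c_e+1$.

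Granting this equivalence, the theorem is immediate for every $\epsilon\in(0,1)$: with $\beta=C_{\max}/3-\epsilon=1-\epsilon$, a satisfiable $\phi$ admits the valid schedule, which is in particular $(+\beta)$-valid, whereas for an unsatisfiable $\phi$ every schedule has an edge with load $c_e+1>c_e+\beta$, so no $(+\beta)$-valid schedule exists. The same picture explains why the additive bound obtained here is $C_{\max}/3$ rather than the ``tight'' $C_{\max}$ matching Theorem~\ref{thm:2 is enough}: a clause is a three-literal disjunction, so its gadget must aggregate three unit flows, forcing a capacity-$3$ edge into the instance, while a bad assignment produces an unavoidable overload of only one unit $=C_{\max}/3$; since by Theorem~\ref{thm:2 is enough} no schedule can overload any edge by more than $C_{\max}$, pushing the bound up to $C_{\max}$ would require a gadget that drives a \emph{maximum-capacity} edge all the way to twice its capacity, which the clause-aggregation construction does not.

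Two points need care. First, \emph{soundness on the satisfiable side}: for each satisfying assignment I must exhibit an explicit round-by-round schedule in which no edge---transport or conflict---is ever loaded above its capacity, \emph{for every} adversarial order of updates within each round; this forces the variable gadgets to be individually resolvable in a way compatible with the loop-freedom and reachability constraints of \S\ref{sec: model} (nodes on $F_i^u\setminus F_i^o$ updated before their predecessors, nodes on $F_i^o\setminus F_i^u$ after), and it forces every clause gadget to route its chosen ``true'' literal so that its conflict edge never transiently holds a third flow. Second---and this is the main obstacle---\emph{completeness on the unsatisfiable side}: I must show that if some clause has all three literals false then, \emph{no matter} how the scheduler batches and orders updates over the whole exponential schedule space (and regardless of choices made in other gadgets), the three literal flows of that clause are simultaneously present on its conflict edge at some round. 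I would establish this as~\cite{AmiriDP0W19} does for its analogous claim: by maintaining a per-round invariant recording which of the three literal sub-paths is still ``active'' at the conflict edge, and arguing that a false literal keeps its sub-path active precisely during the window in which the other two cannot yet have been switched away, so the three windows must overlap. The residual checks---that the instance has size polynomial in $|\phi|$ and that all capacities and demands use polynomially many bits---are routine.
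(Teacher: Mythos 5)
Your overall plan---a polynomial reduction from 3SAT with unit demands and capacities in $\{1,2,3\}$ so that $C_{\max}=3$ and the decisive slack is one unit---matches the paper's setup, which instantiates the multiplicative construction of Theorem~\ref{thm:2-eps is nph} with $a=0$ and capacities $1$, $2$ and $3$. But there is a genuine gap in your completeness direction, and it is exactly the point you flag as ``the main obstacle.'' Your sketch contains no mechanism that prevents the scheduler from \emph{sequentializing} the updates: nothing stops it from first moving one of the two truth flows of a variable gadget, waiting for that update to complete, then moving the other, thereby freeing \emph{both} literal edges of every variable before any clause gadget is touched. Under such a schedule every clause gadget can be resolved one flow at a time with no transient coexistence, so your ``three windows must overlap'' invariant simply fails for unsatisfiable formulas, and the reduction does not go through. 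The assignment must be read off from which truth flow moves \emph{first}, so the construction needs a global device that (i) forbids the second truth flow of each variable from moving until all clauses have been satisfied, and (ii) cannot itself be updated until every clause gadget has freed one unit of capacity. The paper achieves this with the \emph{blocking flow} $F(0)$: its old path occupies the capacity-$2$ ``parking'' path $(w_j^1,w_j(0),w_j^2)$ in every variable gadget (so only one of $F_{true},F_{false}$ fits there alongside it, since $2+1>2+1-\epsilon$), and its updated path traverses every clause edge $(u^i,v^i)$ of capacity $3$ (so it cannot move until each clause has shed one of $F_1,F_2,F_3$, since $3+1>3+1-\epsilon$). This circular dependency is what makes an unsatisfiable formula yield a deadlocked instance in which some flows can never be updated, i.e., no $(+\beta)$-valid schedule exists at all---rather than, as you frame it, a schedule that exists but overloads by one unit.

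Two smaller remarks. First, your soundness direction is essentially right but must be stated as a five-round schedule respecting the dependency cycle (chosen truth flows; one satisfying literal flow per clause; the blocking flow; the remaining truth flows; the remaining clause flows), which your sketch does not yet order. Second, your instance uses only capacities $2$ and $3$ with the violation localized at a capacity-$2$ edge; in the working construction the one-unit violations are spread over edges of capacities $1$ (literal edges $(u^i_j,v^i_j)$), $2$ (parking paths) and $3$ (clause edges), and all three are needed to close the forcing cycle. Your intuition for why the bound degrades from $C_{\max}$ to $C_{\max}/3$---the clause edge must aggregate three unit flows, so $C_{\max}=3$ while the forced excess is only one unit---is correct and consistent with the paper's construction.
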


The proof of the theorem of the additive case goes along the lines of the proof of the multiplicative case of Theorem~\ref{thm:2-eps is nph}. 
The proof of the additive case introduces a new variable gadget between pairs of source-terminal nodes of the valid update problem.
We present details of the proof in the Appendix~\ref{appendix: omitted} for completeness of the paper. 

\section{Algorithms}\label{sec:algorithms}

This section presents algorithms that navigate and exploit the tradeoff between speed and augmentation. We first detail a fast algorithm that will be useful for efficient updates in large networks. In addition, we provide an optimal algorithm based on mixed integer programming, which is useful for small networks and will also serve as a baseline in the evaluation of fast algorithms.

\subsection{Fast Algorithm for Short Update Schedules}
Our first approach is a natural greedy algorithm that computes a short loop-free update schedule.
We then extend this algorithm by adding a post-processing step that reduces augmentation by allowing delays in update schedules. These algorithms are fast enough to be used in most of real-world scenarios.

\paragraph{Algorithm \Greedy}
For this algorithm, we assume a fixed flow pair $i$ and a certain round.
As mentioned in the preliminaries, if a node $v$ is only in the updated flow, it should be updated before nodes that are in both updated and old flows, and if $v$ is only in the old flow, it should be updated after these nodes; \Greedy{} starts by doing exactly that.
Thus, we only need to describe how to order the nodes that belong both to the old and the updated flow of the flow pair $i$.

For each flow pair, we maintain a set $A_i$ of  \emph{active} edges, edges that are actively passing the flow for the pair $i$. 
Also, define $\overline{A_i}$ as the edges from $F_i^u$ that are not yet updated,
and sort them based on their distance to the terminal node in the updated flow $F_i^u$.
Before the first round, $A_i$ consists of all the edges of the old flow, and $\overline{A_i}$ includes all the edges from the updated flow.
During each round, we go through the edges of  $\overline{A_i}$, starting from the one nearest to the terminal node,
and for each edge, if it does not create a directed cycle in the graph induced by $A_i$, we add the edge to $A_i$, and remove it from $\overline{A_i}$, then mark its tail to be updated in the current round.
Such an edge always exists in each update round:
In the first round, the edge of $F_i^u$
nearest to the terminal is directly connected to the terminal, and thus can be updated without creating a cycle.
Later, $F_i^u\cap A_i$ is a set of disjoint paths, and one of them leads to the terminal node. 
On this path, consider the node $v$ that is furthest away from the terminal node;
then, $F_i^u$ contains an edge $(u,v)\in \overline{A_i}$, and node $u$ can be updated without creating a cycle (otherwise, the set $\overline{A_i}$ is empty, which means that update has been finished successfully).

After going through all of $\overline{A_i}$, we update $A_i$ by removing all the edges from $F_i^o$ whose tail was updated, and move to the next round.
Note that in each round, at least the edge currently closest to the terminal will be added to $A_i$ and removed from $\overline{A_i}$; thus, the algorithm ends after at most $|F_i^u|$ rounds.

\begin{figure}[t]
    \centering
    \includegraphics[scale=1.1]{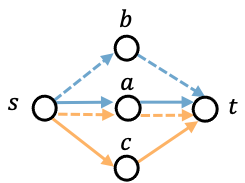}
    \caption{Example where delay reduces the congestion.
    The first flow pair consists of $F^o_1=(s,a,t),F^u_1=(s,b,t)$ and the second is $F^o_2=(s,c,t),F^u_2=(s,a,t)$;
    all demands and capacities equal to $1$. 
    During the second round of the \Greedy{} algorithm, it is possible that node $s$ gets updated for the second flow pair before it is updated for the first flow pair,
    in which case the edges $(s,a)$ and $(a,t)$ get congested. 
    This is avoided by the delay algorithm, which shifts the schedule of the second flow pair by one round.}
     \label{fig: Delay Helps}  
\end{figure}

\paragraph{Algorithm \Delay}
We propose an improved algorithm, called \Delay{}, that modifies any valid schedule by delaying selected flow pairs up to $T$ rounds (usually only 1-2 rounds) to reduce augmentation while increasing the number of rounds up to $T$.
\Delay{} operates in phases. In each phase, the algorithm greedily chooses the flow pair $P_i$ and the delay value $d_i \le T$ 
such that delaying the start of $P_i$'s update schedule by $d_i$ rounds provides the highest decrease the augmentation. \Delay terminates if no such flow pairs exist anymore. Note that the schedule of a flow pair can be delayed in multiple phases.

Figure~\ref{fig: Delay Helps} shows the potential benefit of delaying the update of flow pairs. In this example, delaying the bottom flow pair for one round eliminates congestion in the network.

\subsection{Optimal Algorithm}
We next present an optimal algorithm that is based on a \emph{Mixed Integer Program} (\MIP) for finding a loop-free update schedule that minimizes either the number of rounds or the augmentation.
For the sake of explanation, we consider a fixed multiplicative (additive) augmentation ratio $\alpha$ ($\beta$) and describe the minimization of the number of rounds $R$.
However, given a fixed number of rounds, with a simple change to the \MIP, we can minimize the augmentation instead\footnote{One can think of optimizing both at the same time using biconvex optimization.}.

\paragraph*{Variables:}
Let us fix the flow pair $i$ and the round $r$. We then describe the set of variables related to a node $v \in V(F_i^o \cup F_i^u)$. 
\begin{itemize}
    \item \emph{Node update variable} $x^r_{v,i}$ is a binary variable that indicates whether node $v$ from flow pair $i$ updates in round $r$ or not.
\end{itemize}
We define special variables for nodes that appear in both old and updated flows.
A node $v \in V(F_i^o \cap F_i^u)$ is a \emph{branching} node if it has an outgoing edge $(v,w)$ in the old flow $F^o_i$ and another outgoing edge $(v,w')$ in the updated flow $F_i^u$.
Similarly, $v \in V(F_i^o \cap F_i^u)$ is a \emph{merging} node if it has an incoming edge $(w,v)$ in the old flow $F^o_i$ and another incoming edge $(w',v)$ in the updated flow $F_i^u$.

After each branching node $v$, there exists a merging node by, denoted by $v'$. 
The node $v'$ must exit since the old and updated flows need to intersect again, as they share a common terminal node.
Furthermore, no other branching node appears on the old and updated flows between $v$ and $v'$, as a branching node needs to be part of both old and updated flows. Hence a branching node $v$ is uniquely matched with a merging node $v'$.

\begin{itemize}
    \item \emph{Branching variable} $\Lambda^r_{v,i}$ is a binary variable indicating whether node $v$ is a branching node in the flow pair $i$ and gets updated in round $r$.
    \item \emph{Merging variable} $\Upsilon^r_{v,i}$ is a binary variable indicating whether node $v$ is a merging node in the flow pair $i$ and receives flow from both old and updated flows during round $r$.
\end{itemize}

To avoid creating loops during an update round $r$, we assign an order to all nodes in the flow pair $i$.

\begin{itemize}
    \item \emph{Ordering variable} $o^r_{v,i}$ is an integer variable assigned to node $v$ of flow pair $i$ in round $r$ in the range of $[1,n]$.
\end{itemize}

Now we look at the variables for an edge $e = (v,w) \in F_i^o \cup F_i^u$ of the flow pair $i$. 
We say edge $e$ is \emph{active} during (after) round $r$ if it appears in $F^o_i$ and its tail $v$ has not been updated, or it is part of $F^u_i$ and $v$ has been updated. If an edge $e$ is active for pair $i$, it might be part of the source-terminal flow of pair $i$.

\begin{itemize}
    \item \emph{Edge activity variable} $y^r_{e,i}$ is a binary variable indicating if edge $e$ is active \emph{after} round $r$ in flow pair $i$ or not.
    \item \emph{Edge transitivity variable} $\gamma^r_{e,i}$ is a binary variable indicating whether edge $e$ is active \emph{during} round $r$ in the flow pair $i$ or not.
    \item \emph{Edge flow variable} $f^r_{e,i}$ is fractional variable in the range $(0,1)$ indicating whether the source-terminal flow of flow pair $i$ passes over $e$ during round $r$ or not.
\end{itemize}

\NoCaptionOfAlgo
\begin{algorithm}
\caption{Mixed Integer Program to Compute Optimal Solution}
\small
 \begin{algorithmic}[1]
 \label{line:objective}
 \NoDo
 \STATE \textbf{Minimize} $R$ (or $\alpha$, $\beta$)
 \FORALL{$i \in [|P|]$}
    \STATE $\sum_{r \in [R]} x_{v,i}^r = 1$ \hfill  $\forall  v \in V(F_i^o \cup F_i^u) \setminus \{t_i\}$
    \label{line:UpdateOnce}
    \STATE $y_{(v,w),i}^0 = 1$ \hfill $\forall (v,w) \in F^o_i$
    \label{Line:OldFlow}
    \STATE $y_{(v,w),i}^0 = 0$ \hfill $\forall (v,w) \notin F^o_i$
    \label{Line:UpdatedFlow}
    \FORALL{$r \in [R]$}
    \label{line:forRounds}
      \STATE $R \geq r\cdot x_{v,i}^r$   \hfill  $\forall  v \in V(F_i^o \cup F_i^u) \setminus \{t_i\}$
      \label{line:MaxRoute}
    \STATE $y_{(v,w),i}^r = 1$ \hfill  $\forall (v,w) \in F^o_i \cap F^u_i$
     \label{line:remainActive}
      \STATE $y_{(v,w),i}^r = \sum_{r' \leq r} x_{v,i}^{r'}$ \hfill $\forall (v,w) \in F^u_i \setminus F^o_i$
  \label{line:update-edge-update}
 \STATE $y_{(v,w),i}^r = 1-\sum_{r' \leq r} x_{v,i}^{r'}$ \hfill $\forall (v,w) \in F^o_i \setminus F^u_i$
   \label{line:old-edge-update}
    \FORALL{$\forall (v,w) \in F^o_i \cup F^u_i $}
        \STATE $ \gamma_{(v,w),i}^r \geq y_{(v,w),i}^{r-1}$ 
        \label{line:Trans1}
        \STATE $ \gamma_{(v,w),i}^r \geq y_{(v,w),i}^r $ 
        \label{line:Trans2}
        \STATE $ \gamma_{(v,w),i}^r \leq \frac{o^r_{w,i} - o^r_{v,i} - 1}{|V|-1} + 1$ 
        \label{line:level}
    \ENDFOR
    \FORALL{$\forall v \in P_i $}
        \STATE $\Lambda^r_{v,i} = x^r_{v,i}$ \hfill  $\exists (v,w) \in F^o_i \land (v,w') \in F^u_i$
        \label{line:isFork}
        \STATE $\Lambda^r_{v,i} = 0$ \hfill  $\nexists (v,w) \in F^o_i \land (v,w') \in F^u_i$
        \label{line:notFork}
        \STATE $\Upsilon^r_{v,i} \leq f^r_{(w,v),i},f^r_{(w',v),i} $ \hfill  $\exists (w,v) \in F^o_i \land (w',v) \in F^u_i$
        \label{line:isJoin}
        \STATE $\Upsilon^r_{v,i} = 0 $ \hfill  $\nexists (w,v) \in F^o_i \land (w',v) \in F^u_i$
        \label{line:notJoin}
    \ENDFOR
    \STATE $f^r_{(v,w),i} \leq \gamma^{r}_{(v,w),i} $ \hfill  $\forall (v,w) \in F^o_i \cup F^u_i $
    \label{line:ValidFlows}
    \STATE $\sum_{(s_i,v)} f^r_{(s_i,v),i} = 1 + \Lambda^r_{s_i,i}$ \hfill  $s_i \in P_i$
    \label{line:FlowSource}
    \STATE $\sum_{(v,t_i)} f^r_{(v,t_i),i} = 1 + \Upsilon^r_{t_i,i}$ \hfill  $t_i \in P_i$
    \label{line:FlowTerminal}
    \STATE $\sum_{(v,w)} f^r_{(v,w),i} - \sum_{(w',v)} f^r_{(w',v),i} = \Lambda^r_{v,i} - \Upsilon^r_{v,i} $ 
    \\  $\forall  v \in v \in V(F_i^o \cup F_i^u) \setminus \{s_i,t_i\}$ 
    \\ $(v,w),(w',v) \in F^o_i \cup F^u_i$
    \label{line:flow-all}
  \ENDFOR
 \STATE $\sum_{i \in [|U|]} f_{(v,w),i}^r \cdot d_i  \leq \alpha \cdot c_{(v,w)} + \beta$  \hfill $\forall (v,w) \in E$
  \label{line:CapacityCheck}
 \ENDFOR
\end{algorithmic}
\end{algorithm}

\paragraph*{Constraints:}
As before, let us fix a flow pair $i$.
We now go through properties that an optimal schedule needs to satisfy. 

\paragraph{Node update}
Each node in the flow pair $i$ except for the terminal needs to be updated exactly once.
Thus, for any given node $v \in V(F_i^o \cup F_i^u)$, the node update variable $x^r_{v,i}$ should be equal to $1$ in precisely one of the rounds, and $0$ in others. Hence, the sum of $x^r_{v,i}$ over all rounds should be $1$, as shown in Constraint~\ref{line:UpdateOnce}.
Also, the number of rounds is lower bounded by the last round a node updates, implying Constraint~\ref{line:MaxRoute}.

During round $r$, node $v$ is a branching node if and only if it is updating from an edge $(v,w)$ in the old flow to a different edge $(v,w')$ in the updated flow of flow pair $i$ (see Constraints~\ref{line:isFork} and~\ref{line:notFork}).
Based on our definition, a node is a merging node in round $r$ if it receives flow from two different edges, as shown in Constraints~\ref{line:isJoin} and~\ref{line:notJoin}.

\paragraph{Edge activity} 
Constraints~\ref{Line:OldFlow} and~\ref{Line:UpdatedFlow} guarantee that only edges that are active in the initial round, round $0$, are edges from the old flow.

After the round $0$, if an edge was part of both old and updated flow, i.e. $e \in F_i^o \cup F_i^u$, edge $e$ remains active (Constraint~\ref{line:remainActive}.
Otherwise, if an edge $e$ is only in $F_i^o$, it remains active \emph{until} the round in which its tail $v$ is updated (Constraint~\ref{line:old-edge-update}), and if the edge $e$ is only in $F_i^u$ it becomes active \emph{after} node $v$ updates (Constraint~\ref{line:update-edge-update}).

If the activity of an edge $e$ changes during a round $r$, it can happen at any time during the round $r$. 
Therefore we say edge $e$ is active \emph{during} round $r$ if it was active in the previous round (Constraint~\ref{line:Trans1}) or it becomes active after this round (Constraint~\ref{line:Trans2}).

\paragraph{Loop freedom}
The loop freedom property does not allow any transitive loops to appear at any time during the update.
Thus, for all edges $e = (v,w) \in F_i^o \cup F_i^u$ that can be active during around, i.e. $\gamma_{(v,w),i}^r =1$, we need the order of edge's head, the node $w$, be always larger than edge's tail, the node $v$.
Inspired by the Miller-Tucker-Zemlin constraint used in sub-tour elimination in traveling salesman problem~\cite{miller1960integer}, we define Constraint~\ref{line:level} that prevents loops from being created. Note that the right side of the inequality assumes a value in $[1,2)$ if $o^r_{v,i} < o^r_{w,i}$ and a value less than $1$, otherwise.

\paragraph{Congestion freedom} 
Similar to loop-freedom, we need to maintain congestion freedom at any time during the update.
Any flow pair $i$ during round $r$ can only use the edges that are active during that round (see Constraint~\ref{line:ValidFlows}).

As flows are unsplittable, if flow pair $i$ uses edge $e$, it passes all of its demand $d_i$ through it. 
That is why we can define $f^r_{e,i}$ as a binary value and just multiply them by $d_i$ in Constraint~\ref{line:CapacityCheck}, which ensures the augmented capacity of edge $e$ limits the sum of the flows that pass through $e$.

We know that for any given branching node $v$ in flow pair $i$, there exists a corresponding merging node $u$. If node $v$ updates in round $r$, all edges on both paths between $v$ and $u$ must be active during round $r$. Hence, edges in both paths are prone to congestion.
That is why in round $r$, from branching node $v$, we send flow on both paths, and in merging node $u$, we unify these flows.
By Constraint~\ref{line:FlowSource}, we enforce the source node of flow pair $i$ to send a unit of flow. If the source node is also a branching node, it can temporarily send one additional unit of flow in the round that it updates.
Similarly, the terminal node of flow pair $i$ needs to receive one unit flow, unless it is a merging node that receives one additional unit of flow (Constraint~\ref{line:FlowTerminal}).
For all other nodes of the flow pair $i$, the incoming flow must match the outgoing flow, unless the node is a branching node that can output an additional unit flow, or whether it is a merging node that decreases the flow by one unit (Constraint~\ref{line:flow-all}).

\paragraph*{Objectives:}
We describe two objectives, minimizing the number of rounds given a fixed augmentation and minimizing augmentation given a fixed number of rounds.

\paragraph*{Minimizing number of rounds}
To enforce loop freedom on each of the flow pairs, we might need up to $n-1$ rounds, $n = |U|$ is the number of nodes in the network~\cite{FoersterLMS18}.
As we saw in the discussion about the delay algorithm, in order to minimize the augmentation, it might be beneficial for us not to update each flow pair in each round. Thus, we consider the possibility that flow pairs update one after the each other, and consider the number of rounds $R$ to be in the range  $\{1,\dots,k\cdot(n-1)\}$, where $k$ is the number of flow pairs.
To minimize the number of rounds, we assume that we are given the allowed augmentation, i.e., both $\alpha$ and $\beta$, and run the mixed integer program with the objective to minimize~$R$.

\paragraph*{Minimizing Augmentation}
By Theorem~\ref{thm:2 is enough}, multiplicative augmentation factor $\alpha$ is at most $2$. Additive augmentation value $\beta$ can be between $0$ and the maximum capacity of an edge.
To minimize the multiplicative or additive augmentation, we assume that we are given the number of rounds $R$.
We change Constraint~\ref{line:forRounds} to iterate only up to $R$ rounds, and also remove Constraint~\ref{line:MaxRoute}. 
Finally, we change the objective in Line~\ref{line:objective} to minimize $\alpha$ (or $\beta$) instead of $R$. 
 
\section{Empirical Results}\label{sec:empricalResults}

We complement our analytical results by studying the augmentation-speed tradeoff in practical scenarios and performing an extensive simulation study. We first report our methodology and then present our main insights. The implementation is available at \hyperref[GitHub]{github.com/inet-tub/AugmentRoute}.

\subsection{Methodology}
We implemented our algorithms in python~3.6, using NetworkX~2.5~\cite{networkx}, Numpy~1.19~\cite{numpy} and Matplotlib~3.3~\cite{Matplotlib} libraries.
To solve the mixed integer program, we used Gurobi~9.1~\cite{gurobi}.
We have executed our program on a machine that has Intel Xeons E5-2697V3 SR1XF as CPU and provides 128 GB DDR4 RAM.

\paragraph{Topologies} We have evaluated our algorithms on $218$ connected and directed graphs from the Internet Topology Zoo~\cite{TopologyZoo} that have at most $100$ nodes. We removed trees from the graph set, in which any schedule can update trivially in one round. The limitation on the number of nodes is due to the high running time of the mixed integer program.
As the data set only provides the network topology, we now describe how to set the link capacities, flow pairs, and their demands. 

\paragraph{Generating Flows} 
For the flow pair $i$, we choose two distinct nodes uniformly at random as the source $s_i$ and terminal $t_i$.
If we use the shortest paths, old and updated paths fully overlap with each other, so instead, we extend the shortest path routing with Valiant routing~\cite{LevPV81} (or more generally, segment routing~\cite{FilsfilsNPCF15}) to \emph{segment path routing}. 
In the segment path routing, from nodes other than source and terminal, we randomly choose one of the waypoint nodes, $w^1_i$, and find the shortest $(s_i,w^1_i)$ and $(w^1_i,t_i)$ paths.
The basis of calculating shortest paths is the weights that we assign to each edge, in the range of $(1,100)$. 
Since the high running time of \MIP is a limiting factor, we use $250$ flow pairs in our experiments. 

\paragraph{Setting Link Capacities}
We construct a set of \emph{baseline flows} used to set the link capacities, with demand chosen uniformly at random from $(10,20)$. This method is preferable to other methods, such as setting the capacity of each edge independently, since it assigns link capacities proportionally to a possible link usage from old and updated paths.
 
\paragraph{Setting Demands of Old and Updated Flows} 
To generate demand for old and updated paths such that both of them remain valid,
we use an approach reminiscent of congestion control protocols (and in particular, the slow start algorithm~\cite{SlowStart,Jacobson88}).
We also set the initial demands of all old and updated flows to $1$, and arranged them in round-robin order.
We then multiply the demand by a growth factor $g$. We stop multiplying the demand of a flow pair if either the old or the updated flow cannot be increased due to link capacity constraints.
The value of the growth factor directly affects link utilization, as links get less utilized with an increased growth factor. In our experiments, the default value of the growth factor is $1.1$, as it provides more than 99\% link utilization.

\paragraph{Running \MIP}
We first run \Greedy and~\Delay on each input graph, receiving their update schedules. 
We then calculate the augmentation and the number of rounds that those schedules need.
In the end, we run the mixed integer program first based on the number of rounds that each of those algorithms needs, and also based on the augmentation that they require.

\begin{figure}[t]
    \centering
    \includegraphics[width=0.27\textwidth,trim=0 6 0 10, clip]{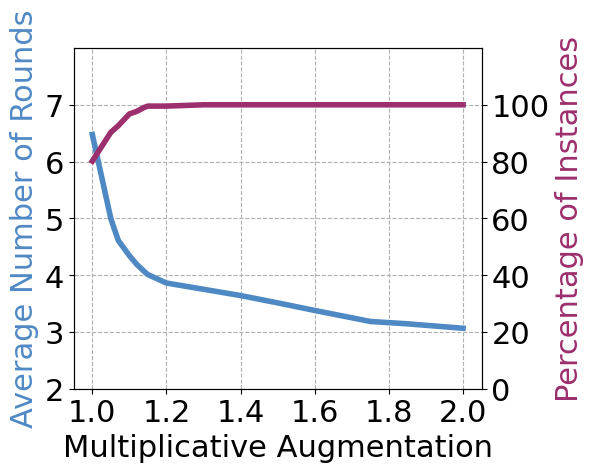}
    \caption{The experimental augmentation-speed tradeoff (blue) and 
    augmentation-feasibility tradeoff (purple).
    With only 20\% augmentation, we observe a sharp decrease from more than $6$ rounds to less than $4$ rounds on average.
    With as little as 15\% augmentation, the percentage of solvable flow pairs increases from 80\% to over 99\%.
    }
    \label{fig: IntroData}  
\end{figure}

\subsection{Results}
We first study the achievable improvement of optimal update schedules with additional augmentation, then evaluate the performance of our greedy algorithms, and finally describe how our algorithms can be used in other practical settings. 

\noindent\textbf{Benefits of Augmentation.}
To evaluate the benefits of augmentation, we use our optimal algorithm to find the best update schedule given a certain amount of augmentation. We only focus on multiplicative augmentation, as additive augmentation follows a similar trend.
Having Theorem~\ref{thm:2 is enough}, it is enough to check augmentations between $1$ and $2$.
We compare instances with different augmentations on two metrics: the average number of rounds which represents the speed of an update schedule, and the percentage of feasible solutions.
As shown in Figure~\ref{fig: IntroData},  the number of rounds drops considerably with a slight increase in augmentation. With only 5\%
additional augmentation, the number of rounds reduces by more than 22\%, and with 10\% augmentation, the number of rounds drops by more than 32\%.
The number of feasible solutions increases by more than 16\% with 10\% augmentation and grows above 99\% while using less than 15\% augmentation.

\noindent\textbf{Comparing Optimal and Fast Algorithms.}
We start by comparing the algorithms when the optimal algorithm is allowed the same augmentation as the \Greedy or \Delay algorithms, with the delay threshold equal to three. 
In Figure~\ref{fig: Rounds} we compare the percentage of optimal and \Greedy (\Delay) schedules that have a certain number of rounds. We see that the \MIP schedules more than half of the cases in three rounds, and the \Greedy and \Delay algorithms can have up to $6$ and $7$ rounds, respectively. The difference between the number of rounds matches our intuition since the cases that \Greedy algorithm performs poorly are rare, but they are possible.

\begin{figure}[t]
    \centering
    \subfigure[]
    {
    \centering
        \includegraphics[width=0.225\textwidth]{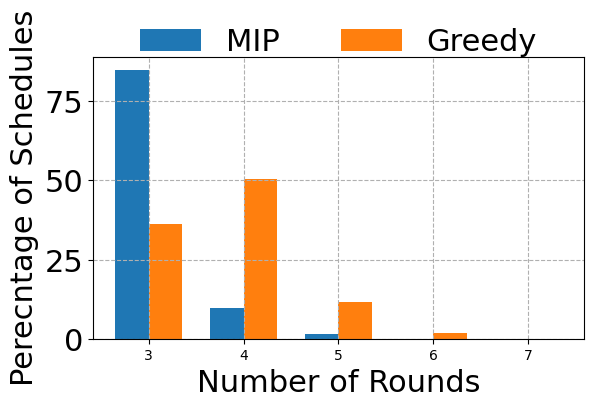}
    }
    \subfigure[]
    {
    \centering
        \includegraphics[width=0.225\textwidth]{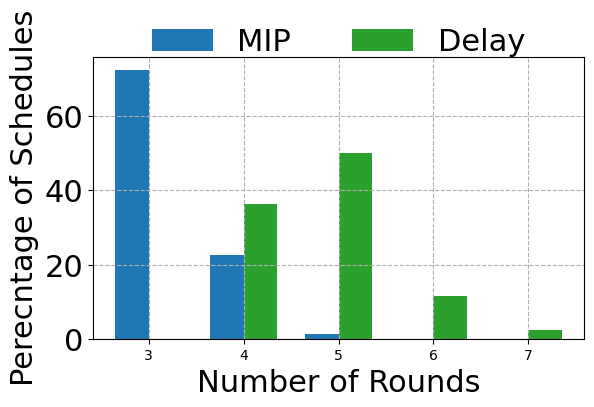}
    }
    \caption{
   The percentage of cases in which a given number of rounds is required for the optimal algorithm, compared to (a)~\Greedy and (b)~\Delay algorithms. 
   The optimal algorithm optimizes the number of rounds, given the multiplicative augmentation resulting from the other algorithm.}
     \label{fig: Rounds}  
\end{figure}

\begin{figure}[h]
    \centering
    \subfigure[]
    {
    \centering
        \includegraphics[width=0.226\textwidth]{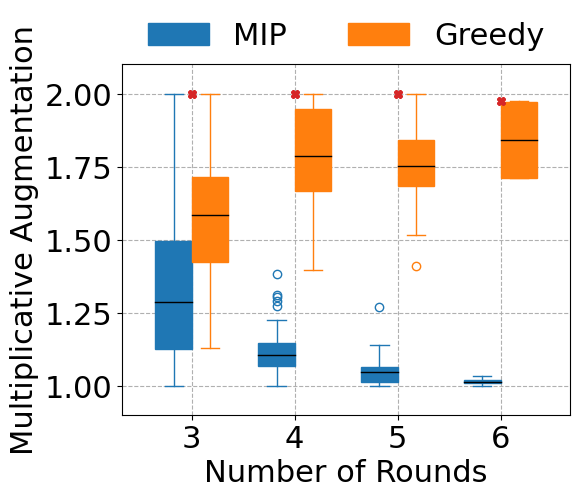}
    }
    \subfigure[]
    {
    \centering
        \includegraphics[width=0.226\textwidth]{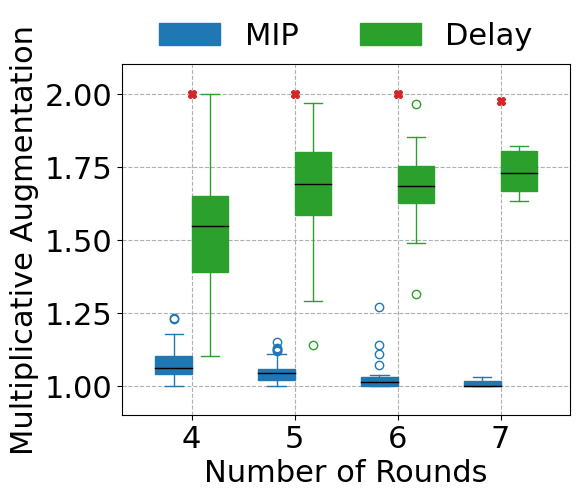}
    }
    \subfigure[]
    {
    \centering
        \includegraphics[width=0.226\textwidth]{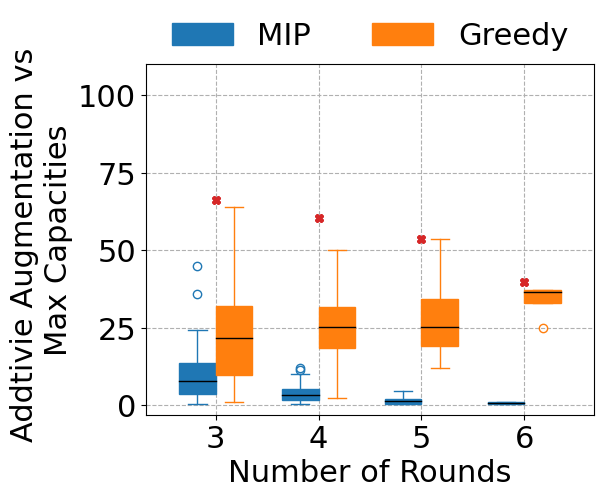}
    }
    \subfigure[]
    {
    \centering
        \includegraphics[width=0.226\textwidth]{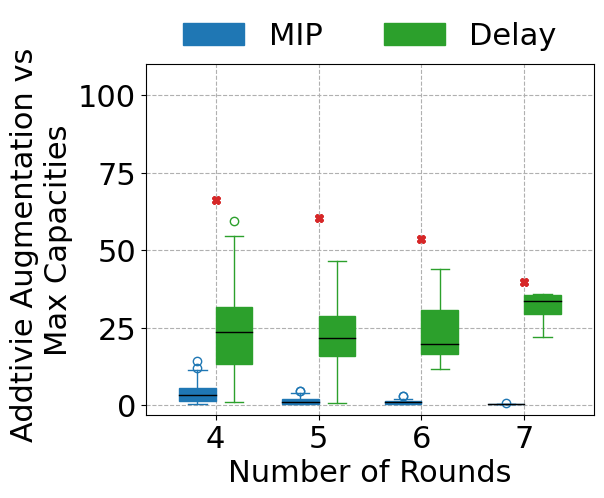}
    }
    \caption{
    Comparison of an optimal scheduling vs.~\Greedy (a,c) and \Delay (b,d), in terms of    
    multiplicative augmentation (a,b) and additive augmentation as percentage of $C_{\max}$ (c,d).
    In each case, the \MIP uses the number of rounds needed by the other algorithm.
    The red dots denote the worst-case scenario when all the possible old and updated flows overlap on each edge.}
     \label{fig: Augmentation}  
\end{figure}

We then fix the number of rounds from the \Greedy and \Delay algorithms and find the required augmentation for the mixed integer program. In Figure~\ref{fig: Augmentation}, we compare the average augmentation given fixed rounds. 
The top figures show the comparison between multiplicative augmentations as shown on the y-axis, and in the bottom figures, we can see additive augmentations.
When the number of rounds is limited, the optimal algorithm needs as high augmentation as \Greedy (\Delay), but when the number of rounds increases, the optimal algorithm can delay or provide a gap between updates of each flow pair, which leads to lower augmentation.

We compare the running time of our algorithms in Figure~\ref{fig: Time}. As expected from an integer program, the running time grows fast when increasing the number of nodes or flows (the variance is due to the various heuristics applied by the solver). The fast growth of running time of \MIP continues even after $100$ nodes. 
Hence, \Greedy and \Delay are particularly useful in large networks as well as when the main focus is on speed. 
On the other hand, the \MIP can be attractive to provide minimal augmentation on small networks.

\noindent\textbf{Takeaway for other settings.}
Given the above results, and based on today's flexible networks, allowing for momentary link augmentations can solve issues, such as long update schedules.
Particularly, our recommendation is that in events of long update schedules in practice, optimizing schedules based on efficient algorithms like \Greedy or \Delay could be enough to reduce the number of rounds significantly.

\begin{figure}[t]
    \centering
    \subfigure[]
    {
    \centering
   	\includegraphics[width=0.225\textwidth,trim=5 10 0 5, clip]{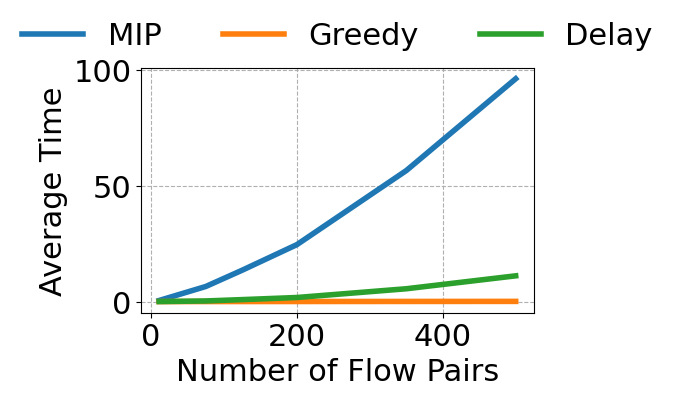}
    }
    \subfigure[]
    {
    \centering
    	\includegraphics[width=0.225\textwidth,trim=5 10 0 5, clip]{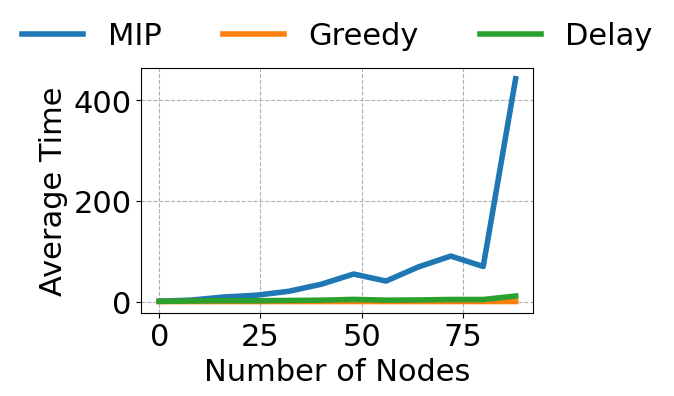}
    }
    \caption{
    Average time (in seconds) needed to run our algorithms on networks from the Internet Topology Zoo:
    (a)~on graphs with at most 50 nodes and varying numbers of flow pairs;
    (b)~on graphs of varying sizes and 250 flow pairs.}
     \label{fig: Time}  
\end{figure}

\section{Additional Related work}\label{sec:relatedWork}

With the rise of software-defined networks, researchers have started exploring the benefits and challenges of
more adaptive network operations on many fronts~\cite{pieee19,SOSR2}. 
The consistent network update problem has already received much attention.
We refer to the extensive survey by Foerster et al.~\cite{FoersterSV19} for an overview.
There also exists interesting empirical studies on the topic of consistent network update, for example, Kuzniar~\etal~\cite{KuzniarPK15} 
showed the high variance in the timing of updates in switches. 
While approaches to maintaining logical properties such as loop-freedom and waypoint enforcement alone
are fairly well-understood in the literature, much less is known about algorithms that provably account for performance 
aspects such as congestion~\cite{FoersterSV19}.

In the seminal work in the area of consistent network update, Reitblatt~\etal~\cite{ReitblattFRSW12} introduced a fairly general two-phase approach 
to update networks while preserving reachability to the terminal. 
Mahajan and Wattenhofer~\cite{MahajanW13} have initiated the study of update mechanisms that do not require packet tagging. They introduced the first approach to maximize the number of links that can be updated in each round.
The problem was shown to be NP-hard
for a single terminal~\cite{ForsterW16,AmiriLMS16}, and the study has been extended for multiple terminals in~\cite{VanbeverVPFB12,ForsterW16}.

When there are multiple routes with different policies, it may further be important to minimize the number of interactions with an individual router~\cite{DudyczLS16}.
Recently, a few papers have focused on the synthesis of update schedules~\cite{Netsynth15,DidriksenJJKLLSS21, Snowcap21, Jiri2022, ifip22}, however,
in this paper, we are interested in the natural and well-studied objective of minimizing the number of rounds. Ludwig~\etal~\cite{LudwigMS15,FoersterLMS18} showed that finding a schedule that provides loop freedom in 3 rounds is NP-hard, and there is always a pair of flows that requires $\Theta(n)$ rounds. The authors of~\cite{LudwigDRS16} presented a mixed integer linear program to compute optimal solutions, however, without considering congestion; our formulation builds upon that and accounts for congestion aspects, as studied in~\cite{AmiriDSW18} for loop-free scenarios.   

On the other hand, relatively little is known about scheduling congestion-free updates, and most existing works revolve around heuristics~\cite{FoersterSV19,ZhengCSDWN17,SOSR1-Congest,HongKMZGNW13}. Amiri~\etal~\cite{AmiriDSW18} presented a first algorithmic result, devising a fixed-parameter tractable algorithm to update a fixed number of flows on directed acyclic graphs. The authors also showed that the problem is NP-hard in general, already for two flows. The authors later extended their results in~\cite{AmiriDP0W19}, considering more general but still acyclic graphs, and focusing on optimal solutions.
To the best of our knowledge, we are the first to observe and exploit the augmentation of links to speed up and improve the feasibility of update schedules. 

\section{Conclusion}\label{sec:conclusion}
This paper uncovered an interesting tradeoff between augmentation, speed and feasibility of network updates. 
We proved that $2$ times multiplicative augmentation is sufficient to make any update schedule feasible and provided insight into the complexity of scenarios with lower augmentation.
We further presented fast and optimal algorithms for finding consistent update schedules and empirically showed that 
the tradeoff between augmentation and speed is even better in practice.

In future works, it would be interesting to further explore this tradeoff considering additional consistency properties
(such as waypoint enforcement), or to explore extensions to scenarios supporting splittable flows.

\begin{acks}
This project has received funding from the
European Research Council (ERC)  under the European Union's Horizon 2020
research and innovation programme (Grant agreement No.\ 101019564
``The Design of Modern Fully Dynamic Data Structures (MoDynStruct)'')
and from the
Austrian Science Fund (FWF) project ``Fast Algorithms for a Reactive Network
Layer (ReactNet)'', P~33775-N, with additional funding from the \textit{netidee SCIENCE
Stiftung}, 2020--2024. This project is also supported by the German Federal Ministry of Education and Research (BMBF), 6G-RIC grant 16KISK020K, 2021-2025, as well as 
by the European Research Council (ERC) under grant agreement No. 864228 (AdjustNet), 2020-2025.

\hfill \includegraphics[width = 0.14\textwidth, ]{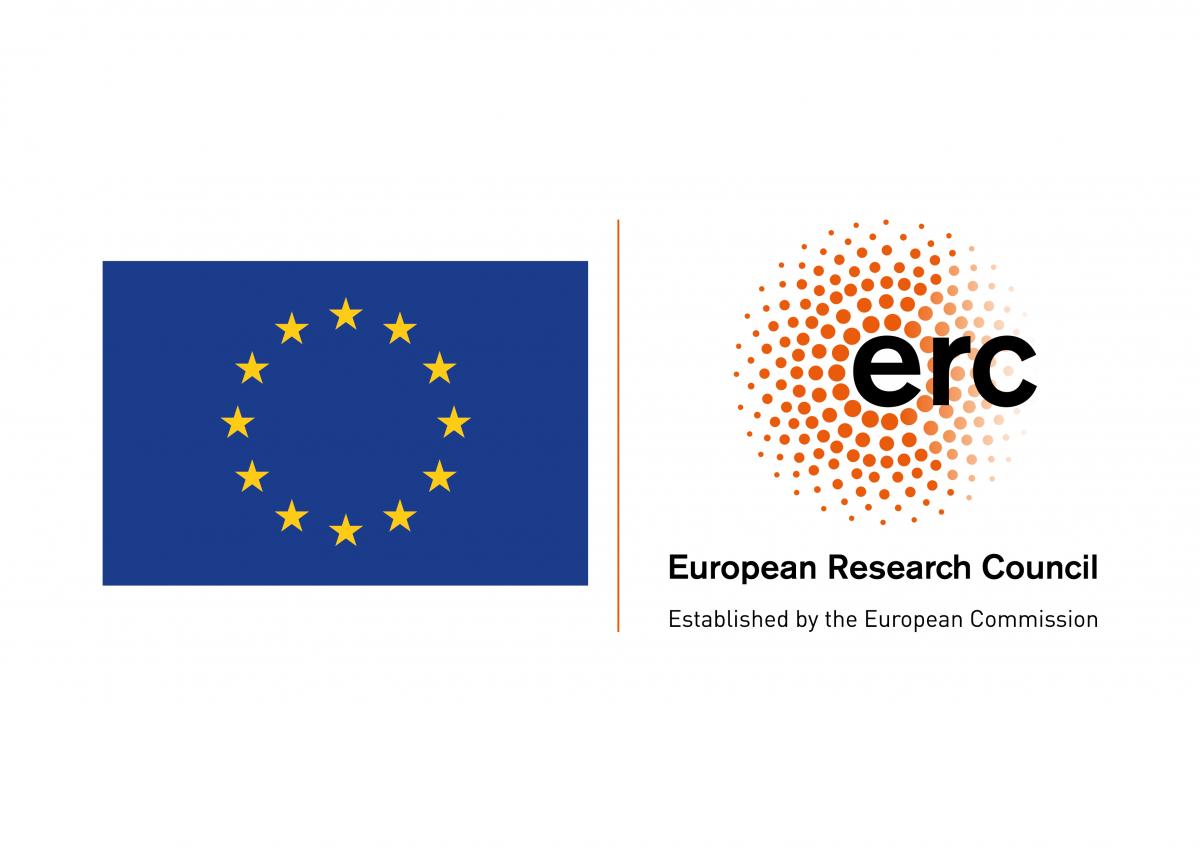}
\end{acks}

\bibliographystyle{ACM-Reference-Format}
\bibliography{rerouting}

\appendix
\section{Omitted proofs}
\label{appendix: omitted}
In this section, we detail the proofs of Theorems~\ref{thm:2-eps is nph} and~\ref{thm:additive is nph} that prove the NP-hardness of finding valid schedules with a limited augmentation.
\multi
Given a 3-CNF formula and a constant $0<\epsilon\leq 1/3$, we build a graph and define flow pairs on it.
The core of the construction is variable gadgets, 
one for each variable $x_j$ in the formula, 
with two flow pairs representing truth values, such that one of them must be updated before the other.
A choice to update the true flow first or the false flow first, for each variable, implies a corresponding assignment of the value true or false to the variable, giving a satisfying assignment for the formula.
In addition, the construction contains another gadget, that guarantees that the choice of flows to update first indeed satisfies each of the clauses.

\paragraph*{The reduction}
The reduction graph is composed of the nodes $s$ and $t$, a pair of nodes $u^i,v^i$ for each clause $C^i$, a pair of nodes $u^i_j,v^i_j$ for each occurrence of a variable $x_j$ (negated or not) in a clause $C^i$,
and, for each variable $x_j$, the nodes $w^1_j,w^2_j$ and $w_j(0),\ldots,w_j(\sqrt a)$, where 
$
a=\left(\left\lfloor\frac{1}{2\epsilon}\right\rfloor+1\right)^2
$.
The graph edges and their capacities are defined as part of the flow definitions, next.

\begin{figure}
	\centering
	\includegraphics[scale=.8]{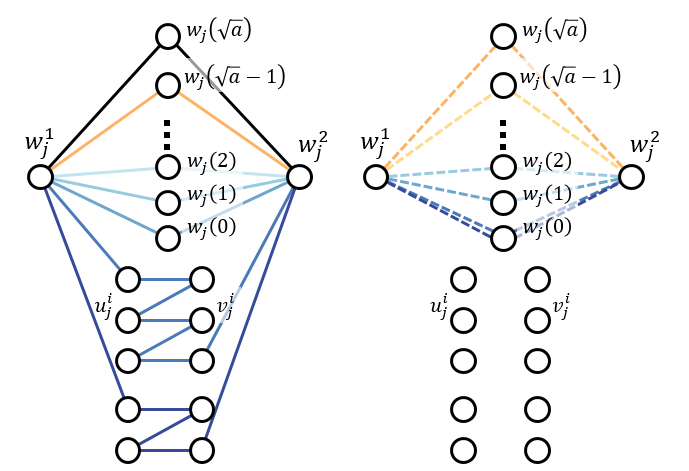}
	\caption{The variable gadget for variable $x_j$ which appears in two clauses without negation and in three clauses negated.
	Left: original flows; right: updated flows; edges without flows are omitted.
	The original flows, from bottom up, have sizes $a,a,2a, 2a+\sqrt a,2a+2\sqrt a,\ldots, 2a+(\sqrt a-1)\sqrt a, 3a$.}
	\label{fig: nph of 2-eps bit gadget}
\end{figure}

\paragraph{The variable gadget}
For each variable $x_j$, we define a gadget --- see Figure~\ref{fig: nph of 2-eps bit gadget}.
The gadget is composed of paths from $w^1_j$ to $w^2_j$, of different capacities and flows.

First, a path 
$(w^1_j,u^{i_1}_j,v^{i_1}_j\ldots ,u^{i_{\max(j)}}_j,v^{i_{\max(j)}}_j,	w^2_j)$ 
from $w^1_j$, through each edge $(u^{i}_j,v^{i}_j)$ such that $x_j$ appears in $C^i$ without negation (in an arbitrary order), and to $w^2_j$;
and, a similar path through each edge $(u^{i}_j,v^{i}_j)$ such that $x_j$ appears in $C^i$ negated. 
All edges in both paths have capacity $a$, and initial flows $F_{true}$ and $F_{false}$ respectively, of demand $a$ each.
Then, for $k=0,\ldots,\sqrt a$, the gadget contains a path $(w^1_j,w_j(k),w^2_j)$ with edges of capacity $2a+k\sqrt a$, and flow $F(k)$ of demand $2a+k\sqrt a$ as well;
the flow $F(\sqrt{a})$ will also be referred to as the \emph{blocking flow}.
The updated flows for both $F_{true}$ and $F_{false}$ are flows through $(w^1_j,w_j(0),w^2_j)$.
For $k=0,\ldots,\sqrt{a}-1$, the updated flow for $F(k)$ is on the path
$(w^1_j,w_j(k+1),w^2_j)$, i.e., the old path of the flow $F(k+1)$.
The updated flow for $F(\sqrt{a})$ is through a different gadget, described next.

\begin{figure}
	\centering
	\includegraphics[scale=1]{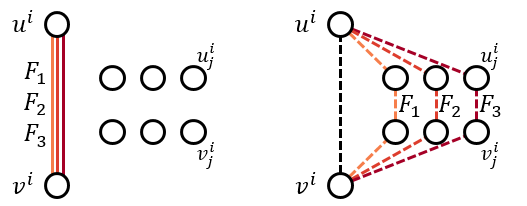}
	\caption{The clause gadget for a clause $C^i$ which contains the variable $x_j$.
	Left: original flows; right: updated flows; edges without flows are omitted.
	The three flows $F_1,F_2,F_3$ appear in both figures and have demand $a$ each;
	the updated blocking flow, which appears only on the right, has demand $3a$, and its original flow appears in the variable gadget (Figure~\ref{fig: nph of 2-eps blocking flow}).}
	\label{fig: nph of 2-eps clause gadget}
\end{figure}

\begin{figure}
	\centering
	\includegraphics[scale=.8,trim=0 0 0 10, clip]{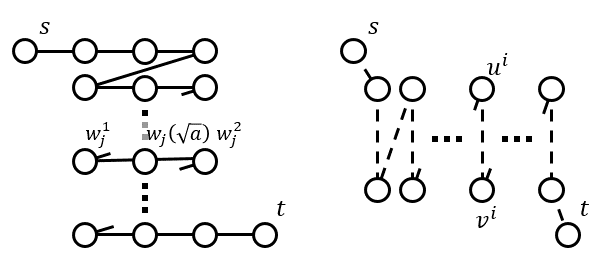}
	\caption{The blocking flow.
	Left: the old flow, going through all the variable gadgets;
	right: the updated flow, going through all the clause gadgets. The flow has demand $3a$, and this is also the capacity of all the edges in the figure.}
	\label{fig: nph of 2-eps blocking flow}
\end{figure}

\paragraph{The clause gadget}
For a clause $C^i$, we build a gadget composed of three flows $F_1,F_2,F_3$, which are initially identical ---  each flow has demand $a$, and they all go through the edge $(u^i,v^i)$, which has capacity $3a$.
The updated flows for these flows are through the paths 
$(u^i,u^i_j,v^i_j,v^i)$ for the three variables $x_j$ appearing in the clause $C^i$ (negated or not), one flow on each path; the edges of this path has capacity~$a$.
Clauses that contain both a variable and its negation are omitted from the construction, as they are always satisfied --- see Figure~\ref{fig: nph of 2-eps clause gadget}.

In addition, the edge $(u^i,v^i)$ is a part of the updated flow of the blocking flow, in a way described next.

\paragraph{Putting everything together}
We now explain how the flows go through the different gadgets.
Each of the flows defined above, i.e., $F_{true}, F_{false}, F_1,F_2,F_3$ and the flows $F(k)$ for $k=0,\ldots,\sqrt{a}$ all start from $s$, and then traverse all the relevant gadgets in an arbitrary order. 
For example, the flow $F_{true}$ goes from $s$ to $w^1_j$ for some $j$, through the gadget of $x_j$ to $w^2_j$, then to $w^1_{j'}$ for some $j'$ and so on, and finally from some $w^2_{j''}$ to $t$.
The corresponding updated flow goes through the same gadgets in the same order, and inside each gadget $j$ goes through $(w^1_j,w_j(0),w^2_j)$, as described above. 
The flows $F_{false}$ and $F(k)$ for $k=0,\ldots,\sqrt{a}$ are all connected in a similar manner.
Each of the flows $F_1,F_2,F_3$ go from $s$ to $u^i$ for some clause $C^i$, to $v^i$ through the gadget of $C^i$, to $v^{i'}$, and so on, until it leaves the last $v^{i''}$ to $t$.
The updated flows are similarly connected.
We assume there are edges connecting $s$ and $t$ to the gadgets and between the gadgets, where each edge has enough capacity to contain all the initial and updated flows assigned to it; note that the flows on these edges (except for $F(\sqrt a)$) are not changed by the updates.
The blocking flow $F(\sqrt a)$
initially traverses all the variable gadgets, and when updated, traverses all the clause gadgets, as depicted in Figure~\ref{fig: nph of 2-eps blocking flow}.
Except for this blocking flow, each flow can be seen as is split into segments, one for each gadget, and the updates on one gadget are independent the updates in others.

\paragraph*{Proof of the reduction}
Using the above construction, we show that deciding if a $(\times(2-\epsilon))$-valid update schedule exists is NP-hard. 
\begin{proof}[Proof of Theorem~\ref{thm:2-eps is nph}]
	Consider a 3-CNF formula, a constant $0<\epsilon < 1/3$, and define $a$, the graph and the flow pairs as described above.
	We show that the formula has a satisfying assignment if and only if the graph has a $(\times(2-\epsilon))$-valid update schedule.
	
	Assume the formula has a satisfying assignment, and define an update schedule in the following order:
	in each variable clause, update $F_{true}$ or $F_{false}$ by the assignment;
	in each variable clause, update $F_1,F_2$ or $F_3$ by the variable satisfying the clause;
	consecutively update $F(k)$, for $k=\sqrt{a},\ldots,0$;
	complete the updates of $F_{true}$ or $F_{false}$;
	complete the updates of $F_1,F_2$ and $F_3$.
	We now detail these updates.
	
	In update step one, update
	in each variable gadget, $F_{true}$ or $F_{false}$ according to the assignment, in parallel. 
	For a variable $x_j$, this forms a $(\times(2-\epsilon))$-valid flow on $(w_j^1,w_j(0),w_j^2)$:
	the capacity of this path is $2a$, the increased capacity is $2a + a =3a \leq 4a-2\epsilon a$, as the initial capacity on it has demand $2a$, and the updated $F_{true}$ or $F_{false}$ has demand $a$.
	
	In step two, update
	in each clause gadget $C^i$, the flow $F_1,F_2$ or $F_3$ that corresponds to the variable $x_j$ that satisfies this clause.
	The updated flow is legal, as the edges $(u^i,u^i_j)$ and $(v^i_j,v^i)$ have enough capacity ($a$) and no flow on them, and the edge $(u^i_j,v^i_j)$ has capacity $a$ as well, and 
	after update step one, it has no flow on it. 
	If there is more than one variable satisfying $C^i$, we can choose any non-empty set of such variables.
	
	Next, update the blocking flow $F(\sqrt a)$.
	Each of the edges $(u^i,v^i)$ has capacity $3a$, increased capacity $(2-\epsilon)3a$, and after update step two, there are at most two flows of demand $a$ each on it, making this update valid, as $5a < (2-\epsilon)3a$.
	
	In the next $\sqrt a$ update rounds update the flows $F(k)$ for $k=\sqrt{a}-1,\ldots,0$, one per round. We will show below that these flows cannot be updated in the same round due to capacity constraints.
	These updates do not affect the flows between the gadgets.
	Inside each clause gadget, the capacity of the path $w_j^1,w(k+1),w_j^2$ is always greater than the flow $F(k)$, and this path is always unused when we update $F(k)$, since this update comes after the update of $F(k+1)$.
	
	In update round $\sqrt a + 4$ update the flows $F_{true}$ and $F_{false}$ in each variable gadget where it was not updated in the first round. 
	This is now valid since in the last round we  updated $F(0)$, so $(w_j^1,w(0),w_j^2)$ has capacity $2a$ and no flow on it.
	
	Finally, in the last update round we update the flows $F_1,F_2$ and $F_3$ in the clauses where they are not updated yet.
	This is possible as the last  update round made the edges of the form $(u^j_i,v^j_i)$ free of flow.
	This shows the existence of a $(\times(2-\epsilon))$-valid update schedule if a satisfying assignment exists. 
	
	For the other direction, assume the graph has a $(\times(2-\epsilon))$-valid update schedule.
	The construction of the flow pairs and the edge capacities implies that the update sequence we define above is the only one possible, which implies a satisfying assignment.
	We now prove this claim.
	
	We start by showing that only the flows $F_{true}$ and $F_{false}$ can be updated in the initial configuration, and only one of them in each variable gadget.
	Consider the variable gadget for $x_j$:
	Each path $(w^1_j,w_j(k),w^2_j)$ has capacity and flow $2a+k\sqrt a$; its excess capacity is thus
	\begin{equation*} 
		\begin{split}
			(1-\epsilon)(2a+k\sqrt a)
			&<2a+k\sqrt a-2\epsilon a\\
			&=2a+(k-1)\sqrt a +\sqrt a(1-2\epsilon\sqrt a)\\
			&<2a+(k-1)\sqrt a
		\end{split}
	\end{equation*}	
	where the last inequality uses the choice of $a$.
	Hence,
	for~all~$1\leq k\leq\sqrt a$, the flow $F(k-1)$ cannot be updated.
	For the flows $F_{true}$ and $F_{false}$, their updated path 
	$(w^1_j,w_j(0),w^2_j)$ has excess capacity $(1-\epsilon)2a<2a$, so they cannot both be updated simultaneously.
	
	Consider the clause gadget for $C^i$.
	The flows $F_1,F_2$ and $F_3$ have a demand of $a$ each, and their updated paths are of the form $(u^i,u^i_j,v^i_j,v^i)$; the central edge $(u^i_j,v^i_j)$ is saturated, and has excess capacity $(1-\epsilon)a$, smaller than the demand of $F_1$, $F_2$, and $F_3$. 
	Finally, the updated blocking flow $F(\sqrt a)$ uses edges of the form $(u^i,v^i)$; each such edge has capacity $3a$, $3$ flows of demand $a$ each, and its excess capacity $(1-\epsilon)3a$ is not enough to accommodate the $3a$ demand of~$F(\sqrt a)$.
	
	By this discussion, we see that the first update round can contain at most one $F_{true}$ or $F_{false}$ update in each variable gadget, and only these updates.
	Moreover, before any other type of update is made, there must be enough such updates to guarantee that all the clauses are satisfied: if this is not the case, there is a clause $C^i$ such that all its three edges of the form $(u^i_j,v^i_j)$ cannot take any of the flows $F_1,F_2$ and $F_3$.
	This, in turn, implies that the blocking flow $F(\sqrt a)$ can never be updated.
	Hence, before the blocking flow is updated, one of  $F_{true}$ and $F_{false}$ must be updated in each variable gadget, and then one of $F_1,F_2$ and $F_3$ must be updated in each clause gadget.
	So, a $(\times(2-\epsilon))$-valid update schedule implies as satisfying assignment, as claimed.
\end{proof}

\begin{figure}
 	\centering
 	\includegraphics[scale=.75]{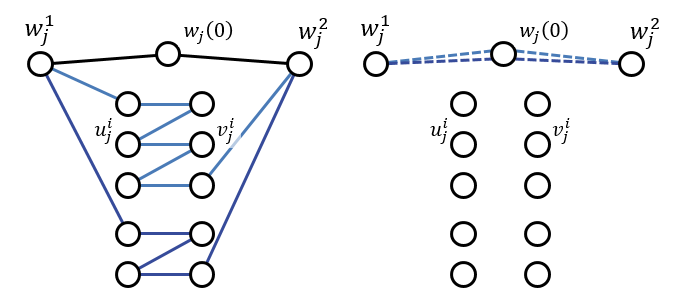}
 	\caption{The variable gadget for variable $x_j$ in the additive case.}
 	\label{fig: nph of additive bit gadget}
 \end{figure}

\adi
\begin{proof}
Given a 3-CNF formula, consider a graph construction as above, with $a=0$.
; see Figure~\ref{fig: nph of additive bit gadget}.
That is, for each $x_j$ the only node of the form $w_j(k)$ is $w_j(0)$, and the blocking flow is $F(0)=F(\sqrt a)$.
For $F_{true},F_{false}$, their updated flows in an $x_j$  gadget are set to be through the path $(w_j^1,w(0),w_j^2)$.

We now have only six flow pairs $F_{true},F_{false},F_1,F_2,F_3$ and $F(0)$, and set all their demands to $1$.
All the edges in the gadgets have capacity $1$, except for the edges of the paths $(w_j^1,w(0),w_j^2)$ with capacity $2$, and the edges of the form $(u^i,v^i)$ with capacity $3$. Thus $C_{\max}= 3$.

For the edges connecting $s$ and $t$ to the gadgets, and between the gadgets, set the capacity to $3$. 
In addition, the flows $F_1,F_2,F_3$ traverse the clause gadget in increasing order of identifiers, while the updated blocking flow $F(0)$ traverses them in the opposite order.
This ensures that the edges between the gadgets never form a bottleneck or prevent updates.
Given a satisfying assignment, the construction of a valid (and thus, a $(+(C_{\max}/3-\epsilon))$-valid) update schedule is similar to the one in the proof of Theorem~\ref{thm:2-eps is nph}.
In the first update round,  in each variable gadget $x_j$, update either $F_{true}$ or $F_{false}$, by the assignment.
This is possible since the path $(w_j^1,w(0),w_j^2)$ has capacity $2$ but only a single flow through it.
In the second update round in each clause  $C^i$, update one of the flows $F_1,F_2,F_3$ which goes through the edge $(u^i_j,v^i_j)$
corresponding to the variable $x_j$ that satisfies the clause.
This is possible since the edge $(u^i_j,v^i_j)$ now carries no flow.
In the third update round update the blocking flow $F(0)$, which uses the edges $(u^i,v^i)$ that were just now cleared out from one of the flows on them. Thus it has flow 2 and excess capacity $2 - \epsilon$, which is sufficient for the demand of flow $F(0)$.
In the forth update round, complete the update of $F_{true}$ and $F_{false}$, and in the fifth update round complete the update of $F_1,F_2$ and $F_3$, all without any additional augmentation in the edges of the graph.
For the converse direction, note that $C_{\max}/3-\epsilon=1-\epsilon$.
In the initial configuration, paths of the form $(w_j^1,w(0),w_j^2)$ have excess capacity of $1+(1-\epsilon)<2$, so only one of $F_{true}$ and $F_{false}$ can be updated before $F(0)$.
The edges $(u^i_j,v^i_j)$ have just $1-\epsilon$ excess capacity, so none of $F_1,F_2$ and $F_3$ can be updated in a clause gadget in which none of the variables was updated.
Hence, a $(+(C_{\max}/3-\epsilon))$-valid update schedule must first update one of $F_{true}$ and $F_{false}$ in each variable clause, in a way that induces a satisfying assignment, as claimed.
\end{proof}

\end{document}